\newcommand{\Occ}{\mathit{Occ}}
\newcommand{\derive}{\mathit{val}}
\newcommand{\deriveInt}{\mathit{itv}}
\newcommand{\VarOcc}{\mathit{vOcc}}
\newcommand{\suffix}{\mathit{suf}}
\newcommand{\prefix}{\mathit{pre}}
\newcommand{\lmq}{\mathit{lm}_q}
\newcommand{\rmq}{\mathit{rm}_q}
\newcommand{\trie}{\Upsilon}
\newcommand{\nodelabel}{\mathit{label}}
\newcommand{\variable}[1]{X_{\langle #1\rangle}}
\newcommand{\kushi}{\xi_\mathcal{T}}
\newlength\savedwidth
\newcommand{\ssa}{\textrm{SSA}}
\newcommand{\stsa}{\textrm{STSA}}
\newcommand{\nsa}{\textrm{NSA}}
\author{
  Keisuke Goto
  \and
  Hideo Bannai
  \and
  Shunsuke Inenaga
  \and
  Masayuki Takeda
}
\institute{
  Department of Informatics, Kyushu University\\
  \email{\{keisuke.gotou,bannai,inenaga,takeda\}@inf.kyushu-u.ac.jp}\\
}
\title{
  Speeding-up $q$-gram mining on grammar-based compressed texts
}
\date{}
\begin{document}
\maketitle
\begin{abstract}
  We present an efficient algorithm for calculating
  $q$-gram frequencies on strings represented in compressed form,
  namely, as a straight line program (SLP).
  Given an SLP $\mathcal{T}$ of size $n$ that represents string $T$,
  the algorithm computes the occurrence frequencies of
  {\em all} $q$-grams in $T$, by reducing the problem to the weighted
  $q$-gram frequencies problem on a trie-like structure of size
  $m = |T|-\mathit{dup}(q,\mathcal{T})$,
  where $\mathit{dup}(q,\mathcal{T})$ is a quantity
  that represents the amount of redundancy that the SLP captures with respect to
  $q$-grams.
  The reduced problem can be solved in linear time.
  Since $m = O(qn)$, the running time of our algorithm is
  $O(\min\{|T|-\mathit{dup}(q,\mathcal{T}),qn\})$,
  improving our previous $O(qn)$ algorithm when
  $q = \Omega(|T|/n)$.  
\end{abstract}
\section{Introduction}
Many large string data sets are usually first compressed and
stored, while they are decompressed afterwards in order to be used and
analyzed.
Compressed string processing (CSP) is an approach that has been
gaining attention in the string processing community. 
Assuming that the input is given in compressed form,
the aim is to develop methods where the string is processed or analyzed without
explicitly decompressing the entire string, 
leading to algorithms with time and space complexities
that depend on the compressed size rather than the 
whole uncompressed size.
Since compression algorithms inherently capture regularities
of the original string, clever CSP algorithms
can be
theoretically~\cite{NJC97,crochemore03:_subquad_sequen_align_algor_unres_scorin_matric,hermelin09:_unified_algor_accel_edit_distan,gawrychowski11:_LZ_comp_str_fast_}, and
even
practically~\cite{shibata00:_speed_up_patter_match_text_compr,goto11:_fast_minin_slp_compr_strin},
faster than algorithms which process the uncompressed string.

In this paper, we assume that the input string
is represented as a Straight Line Program (SLP),
which is a context free grammar in Chomsky normal form
that derives a single string.
SLPs are a useful tool when considering CSP algorithms, since
it is known that outputs of various
grammar based compression algorithms~\cite{SEQUITUR,LarssonDCC99},
as well as dictionary compression algorithms~\cite{LZ78,LZW,LZ77,LZSS}
can be modeled efficiently by SLPs~\cite{rytter03:_applic_lempel_ziv}.
We consider the $q$-gram frequencies problem on compressed text
represented as SLPs.
$q$-gram frequencies have profound applications in the field of string
mining and classification.
The problem was first considered for the CSP setting
in~\cite{inenaga09:_findin_charac_subst_compr_texts},
where an $O(|\Sigma|^2n^2)$-time $O(n^2)$-space algorithm
for finding the {\em most frequent} $2$-gram from an
SLP of size $n$ representing text $T$ over alphabet $\Sigma$ was presented.
In~\cite{claudear:_self_index_gramm_based_compr}, it is claimed
that the most frequent $2$-gram can be found in $O(|\Sigma|^2n\log n)$-time
and $O(n\log|T|)$-space, if the SLP is pre-processed and a self-index is built. 
A much simpler and efficient $O(qn)$
time and space algorithm for general $q \geq 2$ was recently
developed~\cite{goto11:_fast_minin_slp_compr_strin}.

Remarkably, computational experiments on various data sets
showed that the $O(qn)$ algorithm is actually faster than
calculations on uncompressed strings, when $q$ is
small~\cite{goto11:_fast_minin_slp_compr_strin}.
However, the algorithm
slows down considerably
compared to the uncompressed approach when $q$ increases.
This is because the algorithm reduces the 
$q$-gram frequencies problem on an SLP of size $n$, 
to the weighted  $q$-gram frequencies problem on a weighted string of
size at most $2(q-1)n$.
As $q$ increases, the length of the string becomes longer than the
uncompressed string $T$.
Theoretically $q$ can be as large as $O(|T|)$,
hence in such a case the algorithm requires $O(|T|n)$ time,
which is worse than a trivial $O(|T|)$ solution that first decompresses the given SLP and 
runs a linear time algorithm for $q$-gram frequencies computation on $T$.

In this paper, we solve this problem, and improve the previous $O(qn)$ algorithm
both theoretically and practically.
We introduce a $q$-gram neighbor relation on SLP variables,
in order to reduce the redundancy
in the partial decompression of the string which is performed in the
previous algorithm.
Based on this idea, we are able to convert the problem to a
weighted $q$-gram frequencies problem
on a weighted trie, whose size is at most
$|T|-\mathit{dup}(q,\mathcal{T})$.
Here, $\mathit{dup}(q,\mathcal{T})$ is a quantity that represents
the amount of redundancy that the SLP captures with respect to
$q$-grams. Since the size of the trie is also bounded by $O(qn)$,
the time complexity of our new algorithm is
$O(\min\{qn,|T|-\mathit{dup}(q,\mathcal{T})\})$,
improving on our previous $O(qn)$ algorithm when
$q = \Omega(|T|/n)$.
Preliminary computational experiments show that our new approach achieves
a practical speed up as well, for all values of $q$.

\section{Preliminaries}
\subsection{Intervals, Strings, and Occurrences}
For integers $i \leq j$, let $[i:j]$ denote the interval of integers $\{i,\ldots, j\}$.
For an interval $[i:j]$ and integer $q > 0$,
let $\prefix([i:j],q)$ and $\suffix([i:j],q)$
represent respectively, the length-$q$ prefix and suffix interval,
that is,
$\prefix([i:j],q) = [i:\min(i+q-1,j)]$ and
$\suffix([i:j],q) = [\max(i,j-q+1):j]$.

Let $\Sigma$ be a finite {\em alphabet}.
An element of $\Sigma^*$ is called a {\em string}.
For any integer $q > 0$, an element of $\Sigma^q$ is called a \emph{$q$-gram}.
The length of a string $T$ is denoted by $|T|$. 
The empty string $\varepsilon$ is a string of length 0,
namely, $|\varepsilon| = 0$.
For a string $T = XYZ$, $X$, $Y$ and $Z$ are called
a \emph{prefix}, \emph{substring}, and \emph{suffix} of $T$, respectively.
The $i$-th character of a string $T$ is denoted by $T[i]$, where $1 \leq i \leq |T|$.
For a string $T$ and interval $[i:j] (1 \leq i \leq j \leq |T|)$, let $T([i:j])$
denote the  substring of $T$ that begins at position $i$ and ends at
position $j$.
For convenience, let $T([i:j]) = \varepsilon$ if $j < i$.
For a string $T$ and integer $q \geq 0$,
let $\prefix(T,q)$ and $\suffix(T,q)$
represent respectively, the length-$q$ prefix and suffix of $T$,
that is,
$\prefix(T,q) =T(\prefix([1:|T|],q))$ and $\suffix(T,q) = T(\suffix([1:|T|],q))$.

For any strings $T$ and $P$,
let $\Occ(T,P)$ be the set of occurrences of $P$ in $T$, i.e.,
$\Occ(T,P) = \{k > 0 \mid T[k:k+|P|-1] = P\}$.
The number of elements $|\Occ(T,P)|$ is called
the \emph{occurrence frequency} of $P$ in $T$.

\subsection{Straight Line Programs}

\begin{wrapfigure}[11]{r}{0.5\textwidth}
  \vspace{-2.45cm}
  \centerline{\includegraphics[width=0.45\textwidth]{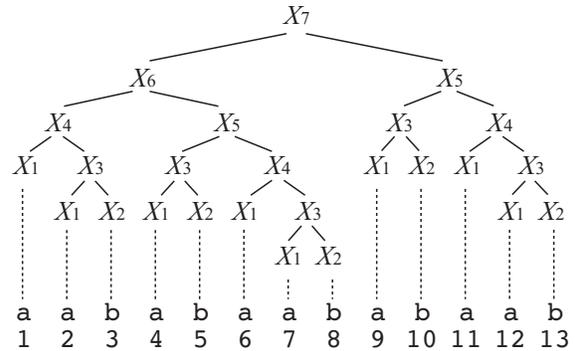}}
  \caption{
    The derivation tree of
    SLP $\mathcal T = \{ X_1 \rightarrow \mathtt{a}$, $X_2 \rightarrow \mathtt{b}$, $X_3 \rightarrow X_1X_2$,
    $X_4 \rightarrow X_1X_3$, $X_5 \rightarrow X_3X_4$, $X_6
    \rightarrow X_4X_5$, $X_7 \rightarrow X_6X_5 \}$,
    representing string $T = \derive(X_7) = \mathtt{aababaababaab}$.
  }
  \label{fig:SLP}
\end{wrapfigure}

A {\em straight line program} ({\em SLP}) is a set of assignments 
$\mathcal T = \{ X_1 \rightarrow expr_1, X_2 \rightarrow expr_2, \ldots, X_n \rightarrow expr_n\}$,
where each $X_i$ is a variable and each $expr_i$ is an expression, where
$expr_i = a$ ($a\in\Sigma$), or $expr_i = X_{\ell(i)} X_{r(i)}$~($i > \ell(i),r(i)$).
It is essentially a context free grammar in the Chomsky normal form, that derives a single string.
Let $\derive(X_i)$ represent the string derived from variable $X_i$.
To ease notation, we sometimes associate $\derive(X_i)$ with $X_i$ and
denote $|\derive(X_i)|$ as $|X_i|$,
and $\derive(X_i)([u:v])$ as $X_i([u:v])$ for any interval $[u:v]$.
An SLP $\mathcal{T}$ {\em represents} the string $T = \derive(X_n)$.
The \emph{size} of the program $\mathcal T$ is the number $n$ of
assignments in $\mathcal T$. 
Note that $|T|$ can be as large as $\Theta(2^n)$. However, we assume
as in various previous work on SLP, 
that the computer word size is at least $\log |T|$, and hence,
values representing lengths and positions of $T$
in our algorithms can be manipulated in constant time.

The derivation tree of SLP $\mathcal{T}$ is a labeled
ordered binary tree where each internal node is labeled with a
non-terminal variable in $\{X_1,\ldots,X_n\}$, and each leaf is labeled with a terminal character in $\Sigma$.
The root node has label $X_n$.
Let $\mathcal{V}$ denote the set of internal nodes in
the derivation tree.
For any internal node $v\in\mathcal{V}$, 
let $\langle v\rangle$ denote the index of its label
$\variable{v}$.
Node $v$ has a single child which is a leaf labeled with $c$
when $(\variable{v} \rightarrow c) \in \mathcal{T}$ for some $c\in\Sigma$,
or
$v$ has a left-child and right-child respectively denoted $\ell(v)$ and $r(v)$,
when
$(\variable{v}\rightarrow \variable{\ell(v)}\variable{r(v)}) \in \mathcal{T}$.
Each node $v$ of the tree derives $\derive(\variable{v})$,
a substring of $T$,
whose corresponding interval $\deriveInt(v)$,
with $T(\deriveInt(v)) = \derive(\variable{v})$,
can be defined recursively as follows.
If $v$ is the root node, then $\deriveInt(v) = [1:|T|]$.
Otherwise, if $(\variable{v}\rightarrow
\variable{\ell(v)}\variable{r(v)})\in\mathcal{T}$,
then,
$\deriveInt(\ell(v)) = [b_v:b_v+|\variable{\ell(v)}|-1]$
and
$\deriveInt(r(v)) = [b_v+|\variable{\ell(v)}|:e_v]$,
where $[b_v:e_v] = \deriveInt(v)$.
Let $\VarOcc(X_i)$ denote the number of times a variable $X_i$ occurs
in the derivation tree, i.e., 
$\VarOcc(X_i) = |\{ v \mid \variable{v}=X_i\}|$.
We assume that any variable $X_i$ is used at least once,
that is $\VarOcc(X_i) > 0$.

For any interval $[b:e]$ of $T (1\leq b \leq e \leq |T|)$, 
let $\kushi(b,e)$ denote the deepest node $v$ in the derivation tree,
which derives an interval containing $[b:e]$, that is,
$\deriveInt(v)\supseteq [b:e]$,
and no proper descendant of $v$
satisfies this condition.
We say that node $v$ {\em stabs} interval $[b:e]$,
and $\variable{v}$ is called the variable that stabs the interval.
If $b = e$, we have that
$(\variable{v} \rightarrow c) \in \mathcal{T}$ for some $c\in\Sigma$,
and $\deriveInt(v) = b = e$.
If $b < e$, then
we have $(\variable{v} \rightarrow
\variable{\ell(v)}\variable{r(v)})\in\mathcal{T}$,
$b\in \deriveInt(\ell(v))$, and  $e\in\deriveInt(r(v))$.
When it is not confusing, we will sometimes 
use $\kushi(b,e)$ to denote the variable $\variable{\kushi(b,e)}$.

SLPs can be efficiently pre-processed to hold various information.
$|X_i|$ and $\VarOcc(X_i)$ can be computed for all variables $X_i
(1\leq i\leq n)$ in a total of $O(n)$ time by a simple dynamic
programming algorithm.
Also, the following Lemma is useful for partial decompression of
a prefix of a variable.
\begin{lemma}[\cite{gasieniec05:_real_time_traver_gramm_based_compr_files}]
  \label{label:prefix_decompression}
  Given an SLP $\mathcal{T} = \{ X_i \rightarrow \mathit{expr}_i \}_{i=1}^n$,
  it is possible to pre-process $\mathcal{T}$ in $O(n)$ time and
  space, so that for any variable $X_i$ and $1 \leq j \leq |X_i|$,
  ${X_i}([1:j])$ can be computed in $O(j)$ time.
\end{lemma}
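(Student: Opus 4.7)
The plan is to pre-process the grammar so that the prefix of any variable can be emitted in time proportional to its length. As a first step, in a single bottom-up pass over the $n$ rules I would compute and store $|X_i|$ and the first character $\mathit{first}(X_i)$ of every variable, both of which satisfy trivial recurrences ($|X_i|=|X_{\ell(i)}|+|X_{r(i)}|$ and $\mathit{first}(X_i)=\mathit{first}(X_{\ell(i)})$ for non-terminal rules), costing $O(n)$ time and space in total.

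The heart of the matter is the \emph{left spine} $Y_0=X_i,\,Y_1=X_{\ell(i)},\,Y_2=X_{\ell(\ell(i))},\ldots,Y_k$ obtained by following left children until a terminal leaf is reached. A naive recursive decompression walks this entire spine before emitting any character, and the spine can have length $\Theta(n)$ even when $j=1$, so a direct approach fails. However, the derivation admits the identity
\[
\derive(X_i)=\derive(Y_k)\cdot\derive(X_{r(Y_{k-1})})\cdot\derive(X_{r(Y_{k-2})})\cdots\derive(X_{r(Y_0)}),
\]
so the length-$j$ prefix is $\mathit{first}(X_i)$ followed by a prefix of the concatenation of the right siblings along the spine, read from deepest to shallowest. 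Coupled with a secondary $O(n)$-size structure that enumerates these spine right-siblings in $O(1)$ per step, the decompression algorithm emits $\mathit{first}(X_i)$ and then recursively decompresses each right sibling in bottom-up spine order, halting once $j$ characters have been produced. A charging argument then shows that each constant unit of work is paid for either by a newly emitted character or by the non-empty right sibling processed in that spine step, giving the $O(j)$ bound.

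The delicate point, where I would rely on the cited paper, is building this spine-traversal structure in $O(n)$ space: since the left-child function is many-to-one, there is no global spine-parent pointer, and storing every spine explicitly would require $\Omega(n^2)$ space in the worst case. The cited construction circumvents this by representing spines implicitly through shared pointers and generating their elements on demand, and this is the genuinely new ingredient that the lemma encapsulates.
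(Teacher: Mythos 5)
The paper itself gives no proof of this lemma --- it is imported verbatim from the cited reference --- and your sketch is consistent with that result: the left-spine factorization $\derive(X_i)=\derive(Y_k)\derive(X_{r(Y_{k-1})})\cdots\derive(X_{r(Y_0)})$ and the charging argument are sound, and you correctly isolate the only genuinely hard ingredient, namely an $O(n)$-space structure giving constant-time-per-step bottom-up access to left spines (in effect level-ancestor queries on the forest induced by the left-child map, precisely the machinery the authors later say an implementation of this lemma requires), which you, like the paper, take from the citation. So the proposal is correct in essentially the same sense as the paper's treatment: both rest on the cited construction rather than reproving it.
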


The formal statement of the problem we solve is:
\begin{problem}[$q$-gram frequencies on SLP]
  \label{problem:SLPqgramfreq}
  Given integer $q\geq 1$ and an SLP $\mathcal{T}$ of size $n$ that represents string $T$,
  output $(i, |\Occ(T,P)|)$ for all $P\in\Sigma^q$ where
  $\Occ(T,P)\neq\emptyset$, and some $i\in\Occ(T,P)$.
\end{problem}
Since the problem is very simple for $q = 1$, 
we shall only consider the case for $q\geq 2$ for the rest of the paper.
Note that although the number of distinct $q$-grams in $T$ is bounded by $O(qn)$,
we would require an extra multiplicative $O(q)$ factor for the output if we output
each $q$-gram explicitly as a string.
In our algorithms to follow, we compute 
a compact, $O(qn)$-size representation of the output,
from which each $q$-gram can be easily obtained in $O(q)$ time.

\section{$O(qn)$ Algorithm~\cite{goto11:_fast_minin_slp_compr_strin}}
\label{section:qn}
In this section, we briefly describe the $O(qn)$ algorithm
presented in~\cite{goto11:_fast_minin_slp_compr_strin}.
The idea is to count occurrences of $q$-grams with respect to
the variable that stabs its occurrence.
The algorithm reduces Problem~\ref{problem:SLPqgramfreq}
to calculating the frequencies of all $q$-grams in a weighted set of strings,
whose total length is $O(qn)$.
Lemma~\ref{lemma:qn_key}
shows the key idea of the algorithm.
\begin{lemma}
  \label{lemma:qn_key}
  For any SLP $\mathcal{T} = \{ X_i \rightarrow \mathit{expr}_i \}_{i=1}^n$
  that represents string $T$, integer $q \geq 2$, and $P \in \Sigma^q$,
  $|\Occ(T,P)| = \sum_{i=1}^n\VarOcc(X_i)\cdot |\Occ(t_i, P)|$,
  where
  $t_i = \suffix(\derive(X_{\ell(i)}),q-1)\prefix(\derive(X_{r(i)}),q-1)$.
\end{lemma}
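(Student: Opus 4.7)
The plan is to partition $\Occ(T,P)$ according to the unique variable in the derivation tree that stabs each occurrence, and then show that the count contributed by each variable $X_i$ equals $\VarOcc(X_i)\cdot|\Occ(t_i,P)|$.

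First I would observe that by definition of $\kushi$, every occurrence of $P$ in $T$ is stabbed by exactly one internal node of the derivation tree. Since $q\geq 2$, a node corresponding to a terminal rule $X_i\to a$ derives only a single character and therefore cannot stab any length-$q$ occurrence; such variables contribute $0$ and can be safely excluded (or checked to satisfy $|\Occ(t_i,P)|=0$ under any reasonable convention for $t_i$). Hence we may restrict attention to variables $X_i$ with $(X_i\to X_{\ell(i)}X_{r(i)})\in\mathcal T$.

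Next I would fix such a variable $X_i$ and a node $v$ with $\variable{v}=X_i$, and characterise the occurrences of $P$ stabbed at $v$. An occurrence starting at position $k\in\deriveInt(v)$ is stabbed by $v$ iff it straddles the boundary between $\deriveInt(\ell(v))$ and $\deriveInt(r(v))$, i.e., $k\in\deriveInt(\ell(v))$ and $k+q-1\in\deriveInt(r(v))$. Writing $b=b_v$ and $L=|X_{\ell(i)}|$, this means the starting offset $k-b+1$ inside $X_i$ lies in $[\max(1,L-q+2):L]$. The substring of $\derive(X_i)$ read at those offsets, extended to length up to $q-1$ on each side of the boundary, is exactly $t_i=\suffix(\derive(X_{\ell(i)}),q-1)\prefix(\derive(X_{r(i)}),q-1)$; this substring depends only on $X_i$ and not on $v$. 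Therefore the set of stabbed occurrences at $v$ is in bijection with the set of occurrences of $P$ in $t_i$ that cross the split point between its suffix part and its prefix part. The key observation completing this step is that \emph{every} occurrence of $P$ in $t_i$ must cross that split point, since each of the two concatenated pieces has length at most $q-1<q=|P|$; consequently the number of occurrences stabbed at $v$ is exactly $|\Occ(t_i,P)|$.

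Finally I would sum over all nodes of the derivation tree: grouping nodes by their label gives
\[
|\Occ(T,P)| \;=\; \sum_{v\in\mathcal V}|\{\text{$P$-occurrences stabbed at $v$}\}| \;=\; \sum_{i=1}^n \VarOcc(X_i)\cdot|\Occ(t_i,P)|,
\]
which is the claim. The main obstacle is the second step, specifically the careful handling of boundary cases when $|X_{\ell(i)}|<q-1$ or $|X_{r(i)}|<q-1$: one must verify that the offsets $k-b+1$ in the valid range correspond precisely, after reindexing, to the starting positions of boundary-crossing matches inside $t_i$, and that the match content in $\derive(X_i)$ agrees character by character with $t_i$ at these positions. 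Once this indexing bookkeeping is settled, the bijection and the summation are immediate.
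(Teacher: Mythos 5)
Your proposal is correct and follows essentially the same route as the paper's proof: partition occurrences of $P$ by the unique stabbing node, observe that the stabbed occurrences at any node labeled $X_i$ correspond exactly to the occurrences of $P$ in $t_i$ (the paper phrases this as containment in $\suffix(\deriveInt(\ell(v)),q-1)\cup\prefix(\deriveInt(r(v)),q-1)$, you phrase it via boundary-crossing, which is equivalent since each half has length at most $q-1$), and then group nodes by label to get the $\VarOcc(X_i)$ weights. Your explicit treatment of terminal variables and of the ``every occurrence in $t_i$ crosses the split'' observation is a slightly more careful spelling-out of what the paper leaves implicit, but it is the same argument.
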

\begin{proof}
  For any $q \geq 2$, $v$ stabs the interval $[u:u+q-1]$
  if and only if
  $[u:u+q-1]\subseteq
  [s_v:f_v] = 
  \suffix(\deriveInt(\ell(v)),q-1)\cup\prefix(\deriveInt(r(v)),q-1)$. 
  (See Fig.~\ref{fig:SLP-kgram}.)
  Also, since an occurrence of $X_i$ in the derivation tree always
  derives the same string $\derive(X_i)$, 
  $t_i = T([s_v:f_v])$ for any node $v$ such that $\variable{v} = X_i$.
  Therefore,
  \begin{eqnarray*}
    \lefteqn{|\Occ(T,P)| = \big|\{ u>0 \mid  T([u:u+q-1]) = P\}\big|}\\
    & = & \sum_{v\in \mathcal{V}} \big| \{ u>0 \mid \xi_{\mathcal{T}}(u,u+q-1)=v, j=u-s_v+1, \variable{v}([j:j+q-1]) = P \}\big| \\
    & = & \sum_{i=1}^n \sum_{v\in \mathcal{V}: \variable{v}=X_i} \big|\{
    u>0 \mid  \xi_{\mathcal{T}}(u,u+q-1)=v,j=u-s_v+1, \variable{v}([j:j+q-1]) = P\}\big|\\
    & = & \sum_{i=1}^n \sum_{v\in \mathcal{V}: \variable{v}=X_i} \Occ(T([s_v:f_v]),P)
    = \sum_{i=1}^n \VarOcc(X_i)\cdot \Occ(t_i,P).\\
  \end{eqnarray*}
  \qed
\end{proof}

\begin{wrapfigure}[15]{r}{0.4\textwidth}
   \vspace{-1cm}
  \begin{center} 
    \includegraphics[width=0.4\textwidth]{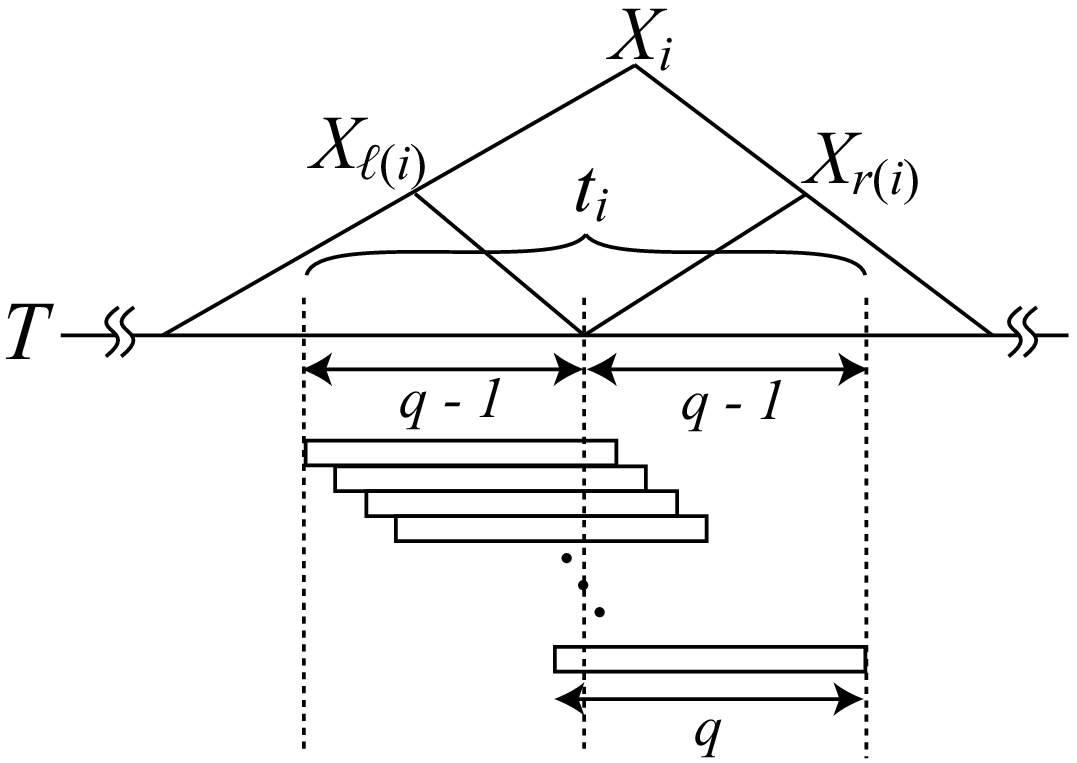}
  \end{center}
  \caption{
    Length-$q$ intervals where
    $\variable{\kushi(u,u+q-1)} = X_i$,
    and $(X_i\rightarrow X_{\ell(i)} X_{r(i)}) \in \mathcal{T}$.
  }
  \label{fig:SLP-kgram}
\end{wrapfigure}

From Lemma~\ref{lemma:qn_key}, we have that occurrence frequencies in
$T$ are equivalent to occurrence frequencies in $t_i$ weighted by $\VarOcc(X_i)$.
Therefore, the $q$-gram frequencies problem can be regarded as
obtaining the {\em weighted} frequencies of all $q$-grams in the set
of strings $\{t_1,\ldots,t_n\}$,
where each occurrence of a $q$-gram in $t_i$ is weighted by $\VarOcc(X_i)$.
This can be further reduced to a weighted $q$-gram frequency
problem for a single string $z$, where each position of $z$
holds a weight associated with the $q$-gram that starts at that position.
String $z$ is constructed by concatenating all $t_i$'s with length at
least $q$.
The weights of positions corresponding to the first $|t_i| - (q-1)$
characters of $t_i$ will be $\VarOcc(X_i)$, while the last $(q-1)$
positions will be $0$ so that superfluous $q$-grams generated by the
concatenation are not counted.
The remaining is a simple linear time algorithm
using suffix and lcp arrays on the weighted string,
thus solving the problem in $O(qn)$ time and space.

\section{New Algorithm}
\label{section:new_algorithm}
We now describe our new algorithm which solves the $q$-gram
frequencies problem on SLPs.
The new algorithm basically follows the previous $O(qn)$ algorithm,
but is an elegant refinement.
The reduction for the previous $O(qn)$ algorithm
leads to a fairly large amount of redundantly decompressed regions of
the text as $q$ increases.
This is due to the fact that
the $t_i$'s are considered independently for each variable $X_i$,
while {\em neighboring} $q$-grams that are stabbed by different
variables actually share $q-1$ characters.
The key idea of our new algorithm is to exploit this redundancy.
(See Fig.~\ref{fig:qgramneighbor}.)
In what follows, we introduce the concept of $q$-gram neighbors,
and reduce the $q$-gram frequencies problem on SLP to
a weighted $q$-gram frequencies problem on a weighted tree.
\subsection{$q$-gram Neighbor Graph}
We say that
$X_j$ is a {\em right $q$-gram neighbor} of $X_i$ $(i \neq j)$,
or equivalently,
$X_i$ is a {\em left $q$-gram neighbor} of $X_j$,
if for some integer $u \in [1:|T|-q]$,
$\variable{\kushi(u,u+q-1)} = X_i$ and
$\variable{\kushi(u+1,u+q)} = X_j$.
Notice that $|X_i|$ and $|X_j|$ are both at least $q$
if $X_i$ and $X_j$ are right or left $q$-gram neighbors of each other.

\begin{figure}[t]
  \centerline{ 
    \includegraphics[width=0.45\textwidth]{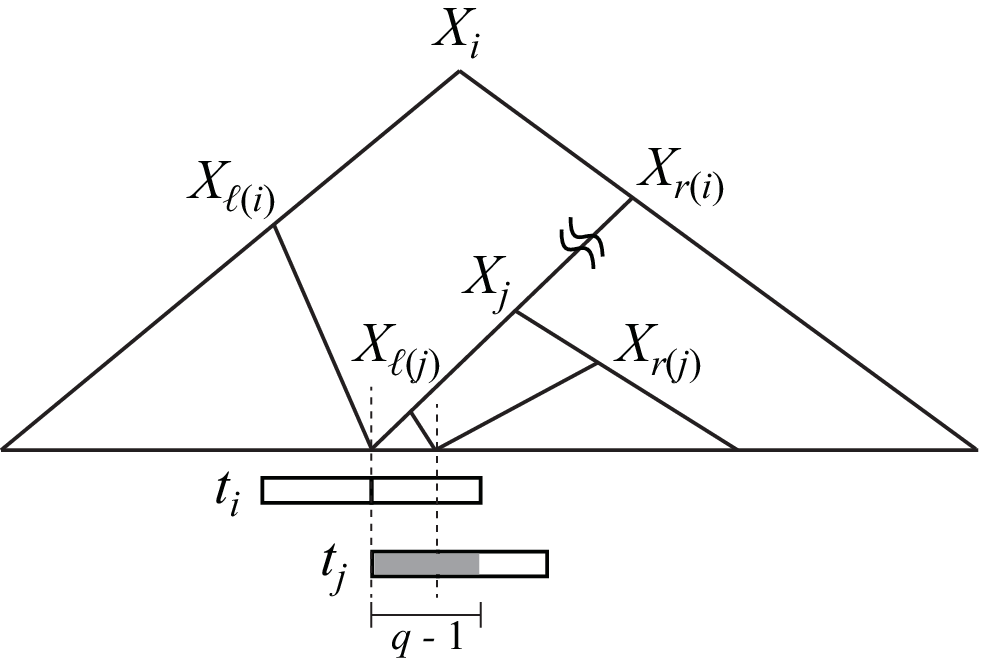}
    \hfill
    \includegraphics[width=0.45\textwidth]{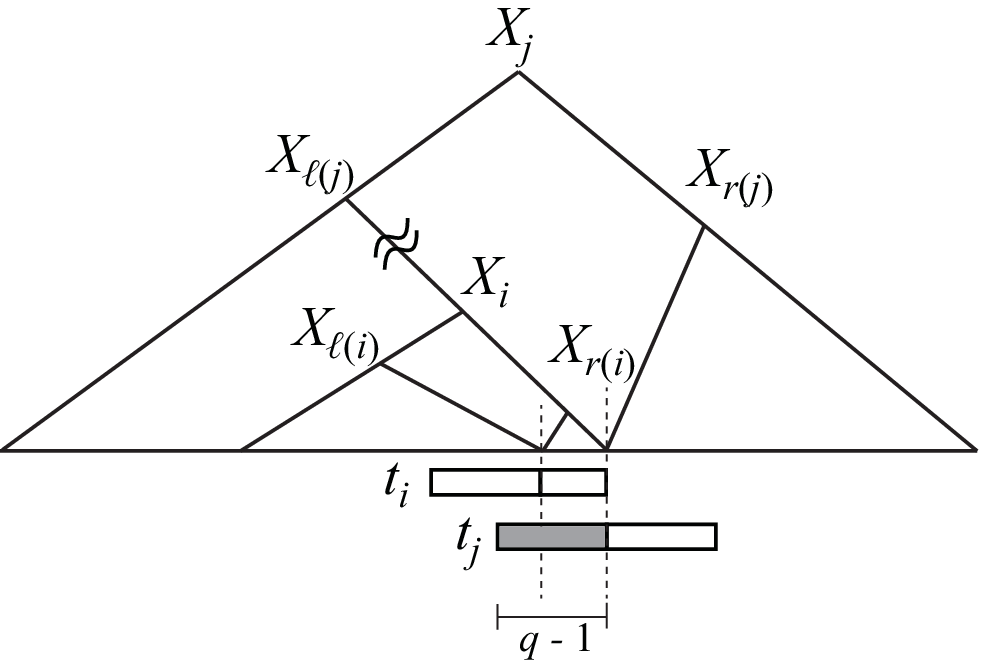}
  }
  \caption{$q$-gram neighbors and redundancies.
    (Left) $X_j$ is a right $q$-gram neighbor of
    $X_i$, and $X_i$ is {\em a} left $q$-gram neighbor of $X_j$.
    Note that
    the right $q$-gram neighbor of $X_i$ is uniquely determined since
    $|X_{r(i)}| \geq q$ and it must be a descendant on the left most path
    rooted at $X_{r(i)}$,
    However, $X_{j}$ may have other left $q$-gram neighbors, since
    $|X_{\ell(j)}| < q$, and they must be ancestors of $X_j$.
    $t_i$ (resp. $t_j$) represents the string corresponding to
    the union of intervals $[u:u+q-1]$ where $\variable{\kushi(u,u+q-1)} = X_i$ (resp. $\variable{\kushi(u,u+q-1)}=X_j$).
    The shaded region depicts the string which is redundantly
    decompressed, if both $t_i$ and $t_j$ are considered independently.
    (Right) Shows the reverse case, when $|X_{r(i)}| < q$.
  }
  \label{fig:qgramneighbor}
\end{figure}

\begin{definition}
  For $q\geq 2$, the right $q$-gram neighbor graph of SLP
  $\mathcal{T} = \{ X_i \rightarrow expr_i \}_{i=1}^n$ is the directed graph
  $G_q = (V,E_r)$, where
  \begin{eqnarray*}
    V&=&\{ X_i \mid i \in \{1,\ldots, n\}, |X_i| \geq q \}\\
    E_r &=& \{ (X_i,X_j) \mid X_j \mbox{ is a right $q$-gram neighbor of
      $X_i$ } \}
  \end{eqnarray*}
\end{definition}

Note that there can be multiple right $q$-gram neighbors for a given variable.
However, 
the total number of edges in the neighbor graph 
is bounded by $2n$, as will be shown below.

\begin{lemma}
  \label{lemma:unique_neighbors}
  Let $X_j$ be a right $q$-gram neighbor of $X_i$.
  If, $|X_{r(i)}| \geq q$, then $X_j$ is the label of the deepest
  variable on the left-most path of the derivation tree rooted at
  a node labeled $X_{r(i)}$ whose length is at least $q$.
  Otherwise, if $|X_{r(i)}| < q$, then
  $X_i$ is the label of the deepest variable on the right-most path rooted at
  a node labeled $X_{\ell(j)}$
  whose length is at least $q$.
\end{lemma}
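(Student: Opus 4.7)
The plan is to fix the defining witness $u$ and analyze the positions of $v_i := \kushi(u,u+q-1)$ (labeled $X_i$) and $v_j := \kushi(u+1,u+q)$ (labeled $X_j$) relative to each other in the derivation tree, using the boundary-crossing characterization of stabbing. Since $v_i$ stabs $[u:u+q-1]$, we have $u \in \deriveInt(\ell(v_i))$ and $u+q-1 \in \deriveInt(r(v_i))$. The first step is to rule out the possibility that $u+1 \in \deriveInt(\ell(v_i))$: if that held, the shifted interval $[u+1:u+q]$ would still cross the boundary at $v_i$, so $v_i$ would also be the deepest node containing it, forcing $X_j=X_i$ and contradicting $i\neq j$. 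Consequently, $u+1$ must equal the first position of $\deriveInt(r(v_i))$, which is the pivot for the rest of the argument.

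In the case $|X_{r(i)}| \geq q$, the interval $[u+1:u+q]$ has length $q \leq |X_{r(i)}|$ and starts exactly at the beginning of $\deriveInt(r(v_i))$, so it is a prefix of $\deriveInt(r(v_i))$. Any descendant of $r(v_i)$ whose derived interval contains position $u+1$ must lie on the leftmost path rooted at $r(v_i)$ (since all other descendants have intervals starting strictly to the right). Walking down this leftmost path, the derived-interval lengths strictly decrease; the deepest node whose length is still $\geq q$ is therefore the unique deepest descendant containing $[u+1:u+q]$. This node must be $v_j$, proving the first alternative.

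In the case $|X_{r(i)}| < q$, one computes from $u+1 = b_{v_i}+|X_{\ell(i)}|$ that $u+q > e_{v_i}$, so $[u+1:u+q]$ does not fit inside $\deriveInt(v_i)$, and $v_j$ must be a proper ancestor of $v_i$. Since $v_j$ stabs $[u+1:u+q]$, the boundary between $\ell(v_j)$ and $r(v_j)$ falls between positions $u+q-1 = e_{v_i}$ and $u+q$; thus $\deriveInt(\ell(v_j))$ ends exactly at $e_{v_i}$, which means $v_i$ sits at the bottom of the rightmost path rooted at $\ell(v_j)$. Since $|X_i|\geq q$, $v_i$ is \emph{a} node on that rightmost path of length $\geq q$; the hypothesis $|X_{r(i)}| < q$ rules out any strict descendant of $v_i$ on the same rightmost path being long enough, so $v_i$ is the deepest such node, yielding the second alternative.

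The main obstacle is the asymmetry of the two cases: in the first, $v_j$ is a descendant of $v_i$ and the identification is a direct walk down the tree; in the second, $v_j$ is an \emph{ancestor} of $v_i$, so the characterization is phrased from $v_j$'s side, and the argument has to establish both that $v_i$ lies on the rightmost path below $\ell(v_j)$ and that no longer-than-$q$ node exists below it on that path. The bookkeeping connecting the geometric condition $u+q > e_{v_i}$ with the arithmetic condition $|X_{r(i)}| < q$ is the other slightly delicate point, but it is a routine computation once the boundary positions are pinned down.
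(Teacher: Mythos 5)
Your overall approach (fix the witness $u$ and locate the shifted interval $[u+1:u+q]$ relative to the children of the stabbing node $v_i=\kushi(u,u+q-1)$) is the same as the paper's, and your first case is sound; the problem is your case-free preliminary step. You rule out $u+1\in\deriveInt(\ell(v_i))$ by arguing that the shifted interval ``would still cross the boundary at $v_i$'', but crossing the boundary is not enough for $v_i$ to stab it: stabbing also requires $[u+1:u+q]\subseteq\deriveInt(v_i)$, i.e.\ $u+q\le e$ where $[b:e]=\deriveInt(v_i)$, and this containment can fail. Indeed, if $|X_{r(i)}|\le q-2$, then $u+1=b+|X_{\ell(i)}|$ is impossible (it would give $u+q-1=b+|X_{\ell(i)}|+q-2>e$, contradicting $u+q-1\in\deriveInt(r(v_i))$), so every witness has $u+1\in\deriveInt(\ell(v_i))$ and $u+q-1=e$; the shifted interval then sticks out of $\deriveInt(v_i)$ and is stabbed by a proper ancestor, so no contradiction with $i\neq j$ arises. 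Consequently your ``pivot'' ($u+1$ is the first position of $\deriveInt(r(v_i))$) is false precisely in the typical sub-case of the second alternative, and your case-2 argument, which starts from that pivot, only covers $|X_{r(i)}|=q-1$.

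The gap is repairable, and the repair keeps the rest of your case-2 reasoning intact. In the first case the containment does hold ($u+1\le b+|X_{\ell(i)}|-1$ and $|X_{r(i)}|\ge q$ give $u+q\le e$), so there the contradiction with $i\neq j$ is valid and your leftmost-path argument goes through, essentially as in the paper. In the second case the fact you actually need is $u+q-1=e$, not a statement about where $u+1$ lies: if $u+q\le e$, then either $u+1\in\deriveInt(\ell(v_i))$ and $v_i$ itself stabs $[u+1:u+q]$, contradicting $\variable{\kushi(u+1,u+q)}=X_j\neq X_i$, or $u+1$ is the first position of $\deriveInt(r(v_i))$ and $[u+1:u+q]\subseteq\deriveInt(r(v_i))$ forces $|X_{r(i)}|\ge q$, contradicting the case hypothesis; hence $u+q-1=e$. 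From there, $v_j$ is a proper ancestor of $v_i$, and you should also justify (not merely assert) that the boundary of $v_j$ sits at $e$: since $v_i$ is a proper descendant of $v_j$, $\deriveInt(v_i)$ cannot straddle $v_j$'s boundary, so $\deriveInt(v_i)\subseteq\deriveInt(\ell(v_j))$ and the right end of $\deriveInt(\ell(v_j))$ is at least $e$, while the stabbing of $[u+1:u+q]$ forces it to be at most $u+q-1=e$. With these two points fixed, your identification of $v_i$ as the deepest node of length at least $q$ on the rightmost path below $\ell(v_j)$ is correct.
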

\begin{proof}
  Suppose $|X_{r(i)}| \geq q$.
  Let $u$ be a position, where
  $\variable{\kushi(u,u+q-1)} = X_i$ and $\variable{\kushi(u+1,u+q)} = X_j$.
  Then, since the interval $[u+1:u+q]$ is a prefix of
  $\deriveInt(X_{r(i)})$,
  $X_j$ must be on the left most path rooted at $X_{r(i)}$. 
  Since $X_j = \variable{\kushi(u+1,u+q)}$,
  the lemma follows from the definition of $\kushi$.
  The case for $|X_{r(i)}| < q$ is symmetrical
  and can be shown similarly.
  \qed
\end{proof}

\begin{lemma}
  For an arbitrary SLP
  $\mathcal{T} = \{ X_i \rightarrow expr_i \}_{i=1}^n$ 
  and integer $q\geq 2$,
  the number of edges in the right $q$-gram neighbor graph $G_q$ of $\mathcal{T}$ is at most $2n$.
\end{lemma}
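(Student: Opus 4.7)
The plan is to charge each edge of $G_q$ to either its source or its target, using Lemma~\ref{lemma:unique_neighbors} as the central tool. Specifically, I would partition the edge set $E_r$ into two classes according to whether $|X_{r(i)}| \geq q$ or $|X_{r(i)}| < q$, and then bound each class by $n$.

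First I would handle the class $E_r^{(1)} = \{(X_i,X_j) \in E_r \mid |X_{r(i)}| \geq q\}$. The first half of Lemma~\ref{lemma:unique_neighbors} states that in this case $X_j$ is uniquely determined from $X_i$ as the label of the deepest variable of length at least $q$ on the leftmost path rooted at a node labeled $X_{r(i)}$. Hence the map $(X_i,X_j) \mapsto X_i$ is injective on $E_r^{(1)}$, so $|E_r^{(1)}| \leq n$. Next I would handle the complementary class $E_r^{(2)} = \{(X_i,X_j) \in E_r \mid |X_{r(i)}| < q\}$. By the second half of Lemma~\ref{lemma:unique_neighbors}, in this case $X_i$ is uniquely determined from $X_j$ as the label of the deepest variable of length at least $q$ on the rightmost path rooted at a node labeled $X_{\ell(j)}$. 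Hence the map $(X_i,X_j) \mapsto X_j$ is injective on $E_r^{(2)}$, giving $|E_r^{(2)}| \leq n$. Adding the two bounds yields $|E_r| \leq 2n$, as desired.

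I do not expect any real obstacle here: Lemma~\ref{lemma:unique_neighbors} already does the heavy lifting. The only subtlety worth double-checking in the write-up is that the two classes genuinely partition $E_r$ (they do, since the condition is on $X_i$ alone and is exhaustive and mutually exclusive), and that in each case the uniqueness really produces an injection into $\{X_1,\ldots,X_n\}$ rather than only into some subset. Both are immediate from the definitions of $\ell(i)$ and $r(i)$.
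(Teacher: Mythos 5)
Your proposal is correct and follows essentially the same argument as the paper: the paper also splits $E_r$ by whether $|X_{r(i)}| \geq q$ or $|X_{r(i)}| < q$ and uses Lemma~\ref{lemma:unique_neighbors} to charge each edge in the first class uniquely to its source $X_i$ and each edge in the second class uniquely to its target $X_j$, giving the bound $2n$. Your explicit phrasing in terms of injections just makes the paper's counting argument more transparent.
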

\begin{proof}
  Suppose $X_j$ is a right $q$-gram neighbor of $X_i$.
  From Lemma~\ref{lemma:unique_neighbors}, we have that
  if $|X_{r(i)}| \geq q$, the right $q$-gram neighbor of $X_i$ is uniquely determined
  and that $|X_{\ell(j)}| < q$.
  Similarly, if $|X_{r(i)}| < q$, $|X_{\ell(j)}|\geq q$ and the left $q$-gram neighbor of $X_j$
  is uniquely $X_i$.  
  Therefore,
  \begin{eqnarray*}
    &&\sum_{i=1}^n |\{ (X_i,X_j) \in E_r \mid  |X_{r(i)}| \geq q \}|
    + \sum_{i=1}^n |\{ (X_i,X_j) \in E_r \mid |X_{r(i)}| < q \}|\\
    &=&
    \sum_{i=1}^n   |\{ (X_i,X_j) \in E_r \mid  |X_{r(i)}| \geq q \}|
    + \sum_{i=1}^n |\{ (X_i,X_j) \in E_r \mid  |X_{\ell(j)}| \geq q \}| \leq 2n.
  \end{eqnarray*}
\qed
\end{proof}

\begin{lemma}
  For an arbitrary SLP $\mathcal{T} = \{ X_i \rightarrow expr_i \}_{i=1}^n$ and
  integer $q\geq 2$,
  the right $q$-gram neighbor graph $G_q$ of $\mathcal{T}$ can be constructed in $O(n)$ time.
\end{lemma}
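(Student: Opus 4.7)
The plan is to exploit the structural characterization of Lemma~\ref{lemma:unique_neighbors}. For every variable $X_k$ with $|X_k|\geq q$, define $\lmq(X_k)$ (resp.\ $\rmq(X_k)$) to be the deepest descendant on the left-most (resp.\ right-most) path rooted at $X_k$ whose length is still at least $q$. Once these two auxiliary tables are built, the lemma identifies the endpoint of each neighbor edge in $O(1)$.

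First I would compute $|X_i|$ for all $i$ in $O(n)$ time by the usual bottom-up recurrence $|X_i|=|X_{\ell(i)}|+|X_{r(i)}|$. Then build $\lmq$ in a single bottom-up pass: if $|X_k|\geq q$, set $\lmq(X_k)=\lmq(X_{\ell(k)})$ when $|X_{\ell(k)}|\geq q$, and $\lmq(X_k)=X_k$ otherwise. Only the left child must be inspected because lengths strictly decrease along any root-to-leaf path, so the recursion bottoms out as soon as the left child falls below $q$. Build $\rmq$ symmetrically. This preprocessing costs $O(n)$ time and space.

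Finally, enumerate $E_r$. By Lemma~\ref{lemma:unique_neighbors} every edge $(X_i,X_j)$ is of exactly one of two disjoint forms: type~(A) with $|X_{r(i)}|\geq q$ and $X_j=\lmq(X_{r(i)})$, or type~(B) with $|X_{r(i)}|<q$ (which forces $|X_{\ell(j)}|\geq q$) and $X_i=\rmq(X_{\ell(j)})$. So one scan over $i=1,\ldots,n$ emits each type-(A) edge $(X_i,\lmq(X_{r(i)}))$ whenever $|X_i|\geq q$ and $|X_{r(i)}|\geq q$, and a second scan over $j$ emits each type-(B) edge $(\rmq(X_{\ell(j)}),X_j)$ whenever $|X_j|\geq q$ and $|X_{\ell(j)}|\geq q$; each scan is $O(n)$.

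The step I expect to need the most care is verifying that every edge emitted is genuinely in $E_r$ and, conversely, that no real edge is missed. For type~(A), one exhibits a witnessing position by choosing $u$ so that $[u{+}1:u{+}q]$ aligns with the length-$q$ prefix of $\deriveInt(r(i))$; this is a legal choice because $|X_i|,|X_{r(i)}|\geq q$, and by definition of $\kushi$ the stabbing variable of the shifted interval is exactly $\lmq(X_{r(i)})$. Type~(B) is symmetric, taking $[u:u{+}q{-}1]$ to be the length-$q$ suffix of $\deriveInt(\ell(j))$. Coverage follows because the two cases partition $E_r$ according to whether $|X_{r(i)}|\geq q$, so the enumeration produces $E_r$ exactly. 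All steps together run in $O(n)$ time.
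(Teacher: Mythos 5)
Your proposal is correct and follows essentially the same route as the paper: compute $\lmq$ and $\rmq$ bottom-up in $O(n)$ time, then emit the edges of $E_r$ in two linear scans using the characterization of Lemma~\ref{lemma:unique_neighbors}. The only difference is that you additionally spell out the witnessing positions showing each emitted edge is genuine, which the paper leaves implicit.
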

\begin{proof}
  For any variable $X_i$, let
  $\lmq(X_i)$ and $\rmq(X_i)$ respectively represent the 
  index of the label of the deepest
  node with length at least $q$
  on the left-most and right-most path
  in the derivation tree rooted at $X_i$, or $\mathit{null}$ if $|X_i| < q$.
  These values can be computed for all variables in
  a total of $O(n)$ time based on the following recursion:
  If $(X_i \rightarrow a)\in\mathcal{T}$ for some $a\in\Sigma$, then 
  $\lmq(X_i) = \rmq(X_i) = \mathit{null}$.
  For $(X_i \rightarrow X_{\ell(i)}X_{r(i)})\in\mathcal{T}$,
  \begin{equation*}
    \lmq(X_i) = \begin{cases}
      \mathit{null}    & \mbox{if } |X_i| < q,\\
      i              & \mbox{if } |X_i| \geq q \mbox{ and } |X_{\ell(i)}| < q,\\
      \lmq(X_{\ell(i)}) & \mbox{otherwise. }
    \end{cases}
  \end{equation*}
  $\rmq(X_i)$ can be computed similarly.
  Finally,
  \begin{eqnarray*}
    E_r & = &
    \{ (X_i, X_{\lmq(X_{r(i)})}) \mid \lmq(X_{r(i)}) \neq \mathit{null}, 
    i = 1,\ldots, n \}\\
    &&\cup \{ (X_{\rmq(X_{\ell(i)})}, X_i) \mid \rmq(X_{\ell(i)}) \neq
    \mathit{null},
    i=1,\ldots, n\}.
  \end{eqnarray*}
  \qed
\end{proof}

\begin{lemma}
  \label{lemma:connected}
  Let $G_q = (V, E_r)$ be the right $q$-gram neighbor graph of 
  SLP $\mathcal{T} = \{ X_i = expr_i \}_{i=1}^n$ representing string $T$,
  and let $X_{i_1} = \variable{\kushi(1,q)}$.
  Any variable $X_j \in V (i_1 \neq j)$ is reachable from $X_{i_1}$,
  that is, there exists a directed path from $X_{i_1}$ to $X_j$ in $G_q$.
\end{lemma}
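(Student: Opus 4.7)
The plan is to exhibit, for each $X_j \in V$ with $j \neq i_1$, an explicit directed walk in $G_q$ by sliding a length-$q$ window through $T$ and recording the stabbing variable at each position.

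First I would set up the sliding-window walk. For every $u \in [1 : |T|-q]$, set $Y_u = \variable{\kushi(u, u+q-1)}$. By the very definition of right $q$-gram neighbor, either $Y_u = Y_{u+1}$, or $(Y_u, Y_{u+1}) \in E_r$. Hence the sequence $Y_1, Y_2, \ldots, Y_{|T|-q+1}$, after collapsing consecutive duplicates, is a directed walk in $G_q$ starting at $Y_1 = X_{i_1}$.

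Second, I would show that every $X_j \in V$ appears as some $Y_u$. Pick any node $v$ in the derivation tree with $\variable{v} = X_j$; such a $v$ exists because $\VarOcc(X_j) > 0$. Since $|X_j| \geq q \geq 2$, the rule at $v$ has the form $(X_j \to \variable{\ell(v)}\variable{r(v)}) \in \mathcal{T}$ with $|\variable{\ell(v)}|,|\variable{r(v)}| \geq 1$ and $|\variable{\ell(v)}| + |\variable{r(v)}| = |X_j| \geq q$. Write $\deriveInt(v) = [b_v:e_v]$ and $m = b_v + |\variable{\ell(v)}|$. I seek $u$ of the form $u = m - k$ satisfying the four conditions $u \geq b_v$, $u+q-1 \leq e_v$, $u \leq m-1$, and $u+q-1 \geq m$; these translate into $k \in [\max(1, q - |\variable{r(v)}|) : \min(q-1, |\variable{\ell(v)}|)]$. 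This integer range is nonempty exactly because $q \geq 2$ and $|\variable{\ell(v)}| + |\variable{r(v)}| \geq q$. For such a $u$, the interval $[u:u+q-1]$ lies inside $\deriveInt(v)$ and straddles the boundary between $\deriveInt(\ell(v))$ and $\deriveInt(r(v))$, so no proper descendant of $v$ can stab it; hence $\variable{\kushi(u,u+q-1)} = X_j$, placing $X_j = Y_u$ on the walk.

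Combining the two steps yields a directed path in $G_q$ from $X_{i_1}$ to $X_j$, proving reachability. The crux of the argument is Step 2, which ultimately reduces to checking nonemptiness of the integer interval for $k$; Step 1 is a direct consequence of the definition of $E_r$.
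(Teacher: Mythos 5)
Your proof is correct and follows the same route as the paper's (very terse) argument: slide the length-$q$ window from position $1$ rightwards, noting that consecutive stabbing variables are either equal or joined by an edge of $E_r$. Your Step 2 simply makes explicit a fact the paper leaves implicit, namely that every $X_j\in V$ (having $|X_j|\geq q$ and $\VarOcc(X_j)>0$) actually stabs some length-$q$ window of $T$, so it appears on that walk.
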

\begin{proof}
Straightforward, since any $q$-gram of $T$ except for the left most
$T([1:q])$ has a $q$-gram on its left.\qed
\end{proof}

\subsection{Weighted $q$-gram Frequencies Over a Trie}
\label{section:weighted_q-gram_frequencies_over_a_trie}
From Lemma~\ref{lemma:connected}, we have that the right $q$-gram
neighbor graph is connected. 
Consider an arbitrary directed spanning tree
rooted at $X_{i_1} = \variable{\kushi(1,q)}$ which can be obtained in linear time by a depth
first traversal on $G_q$ from $X_{i_1}$.
We define the label $\nodelabel(X_i)$ of each node $X_i$ of the
$q$-gram neighbor graph, by \[\nodelabel(X_i) = t_i[q:|t_i|] \]
where $t_i =
\suffix(\derive(X_{\ell(i)}),q-1)\prefix(\derive(X_{r(i)}),q-1)$ as before.
For convenience, let $X_{i_0}$ be a dummy variable
such that
$\nodelabel(X_{i_0}) = T([1:q-1])$, and
$X_{r(i_0)} = X_{i_1}$ (and so $(X_{i_0},X_{i_1})\in E_r$).

\begin{lemma}
  \label{lemma:path}
  Fix a directed spanning tree on the right $q$-gram neighbor graph
  of SLP $\mathcal{T}$, rooted at $X_{i_0}$.
  Consider a directed path $X_{i_0}, \ldots, X_{i_m}$ on the spanning tree.
  The weighted $q$-gram frequencies on the string obtained by
  the concatenation $\nodelabel(X_{i_0}) \nodelabel(X_{i_1}) \cdots \nodelabel(X_{i_m})$,
  where each occurrence of a $q$-gram that ends in a position in
  $\nodelabel(X_{i_j})$ is weighted by
  $\VarOcc(X_{i_j})$,
  is equivalent to the
  weighted $q$-gram frequencies of
  strings $\{t_{i_1}, \ldots t_{i_m}\}$
  where each $q$-gram in $t_{i_j}$ is weighted by $\VarOcc(X_{i_j})$.
\end{lemma}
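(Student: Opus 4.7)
The plan is to prove a window property: for each $j \in \{1,\ldots,m\}$, the substring of $W := \nodelabel(X_{i_0})\nodelabel(X_{i_1})\cdots\nodelabel(X_{i_m})$ of length $|t_{i_j}|$ that ends at the last position of $\nodelabel(X_{i_j})$ is exactly $t_{i_j}$. Once this is established, the $q$-grams of $W$ whose rightmost character lies inside $\nodelabel(X_{i_j})$ are in position-preserving bijection with the $|t_{i_j}|-q+1 = |\nodelabel(X_{i_j})|$ positional $q$-grams of $t_{i_j}$, and both sides of the stated equivalence weight every such occurrence by $\VarOcc(X_{i_j})$; summing over $j$ completes the proof. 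No $q$-gram of $W$ can end inside $\nodelabel(X_{i_0})$ since $|\nodelabel(X_{i_0})| = q-1$, so that block contributes zero on both sides and need not be considered separately.

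The heart of the argument is to prove the window property by induction on $j$. For $j=1$, the $q-1$ characters preceding $\nodelabel(X_{i_1})$ in $W$ are exactly $\nodelabel(X_{i_0}) = T([1:q-1])$. Because $X_{i_1} = \variable{\kushi(1,q)}$, the first $q$-gram $T([1:q])$ of $T$ is stabbed by $X_{i_1}$, and the identification used in the proof of Lemma~\ref{lemma:qn_key} places $T([1:q])$ as the leftmost positional $q$-gram of $t_{i_1}$; restricting to length $q-1$ gives $T([1:q-1]) = \prefix(t_{i_1},q-1)$, and appending $\nodelabel(X_{i_1}) = t_{i_1}[q:|t_{i_1}|]$ reconstructs $t_{i_1}$ as the desired window. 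For the inductive step, the hypothesis says that $W$ up through $\nodelabel(X_{i_{j-1}})$ ends with $t_{i_{j-1}}$ and hence with $\suffix(t_{i_{j-1}},q-1)$. Since $(X_{i_{j-1}},X_{i_j}) \in E_r$, there is a position $u$ with $\variable{\kushi(u,u+q-1)} = X_{i_{j-1}}$ and $\variable{\kushi(u+1,u+q)} = X_{i_j}$; a short case analysis on which of the two stabbing nodes is an ancestor of the other shows that $u$ is the rightmost position stabbed by its $X_{i_{j-1}}$-node and $u+1$ is the leftmost position stabbed by its $X_{i_j}$-node. Consequently $T([u:u+q-1])$ is the last positional $q$-gram of $t_{i_{j-1}}$, $T([u+1:u+q])$ is the first positional $q$-gram of $t_{i_j}$, and their $q-1$-character overlap yields $\suffix(t_{i_{j-1}},q-1) = \prefix(t_{i_j},q-1)$; appending $\nodelabel(X_{i_j}) = t_{i_j}[q:|t_{i_j}|]$ completes the window.

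The main obstacle I anticipate is the degenerate case where $|X_{\ell(i_j)}| < q-1$ or $|X_{r(i_{j-1})}| < q-1$, so that $|t_{i_j}|$ or $|t_{i_{j-1}}|$ falls short of $2(q-1)$ and the \emph{first} or \emph{last} positional $q$-gram abuts a truncated end of the layout; there, one must carefully invoke Lemma~\ref{lemma:unique_neighbors} to verify that the witness $u$ still lines up the truncated $t_{i_{j-1}}$ and $t_{i_j}$ along their shared $q-1$ characters. Once this alignment is checked, the remaining bookkeeping of weights and positions is routine.
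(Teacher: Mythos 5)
Your proposal is correct and follows essentially the same route as the paper's proof: an induction along the path showing that each $t_{i_j}$ appears as the window of $W$ ending at the last position of $\nodelabel(X_{i_j})$, so the $|\nodelabel(X_{i_j})|$ new $q$-grams introduced by that block are exactly the $q$-grams of $t_{i_j}$ with weight $\VarOcc(X_{i_j})$. The only difference is that you explicitly derive the overlap $\suffix(t_{i_{j-1}},q-1)=\prefix(t_{i_j},q-1)$ from the neighbor relation (via the witness position $u$ and Lemma~\ref{lemma:unique_neighbors}), a step the paper's proof asserts without detail.
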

\begin{proof}
  Proof by induction:
  for $m = 1$,
  we have that $\nodelabel(X_{i_0})\nodelabel(X_{i_1}) = t_{i_1}$.
  All $q$-grams in $t_{i_1}$ end in $t_{i_1}$
  and so are weighted by $\VarOcc(X_{i_1})$.
  When $\nodelabel(X_{i_j})$  is added to 
  $\nodelabel(X_{i_0}) \cdots \nodelabel(X_{i_{j-1}})$,
  $|\nodelabel(X_{i_j})|$ new $q$-grams are formed, which correspond to
  $q$-grams in $t_{i_j}$, i.e. $|t_{i_j}| = q - 1 + |\nodelabel(X_{i_j})|$,
  and $t_{i_j}$ is a suffix of $\nodelabel(X_{i_{j-1}})\nodelabel(X_{i_{j}})$.
  All the new $q$-grams end in $\nodelabel(X_{i_j})$ and are thus
  weighted by $\VarOcc(X_{i_j})$.
  \qed
\end{proof}

From Lemma~\ref{lemma:path}, we can construct a weighted trie $\trie$
based on a directed spanning tree of $G_q$ and $\nodelabel()$, 
where 
the weighted $q$-grams in $\trie$ (represented as length-$q$ paths)
correspond to the 
occurrence frequencies of $q$-grams in $T$.
\footnote{A minor technicality is that a node in $\trie$ may have 
multiple children with the same character label, but this 
does not affect the time complexities of the algorithm.}

\begin{algorithm2e}[t]
\caption{Constructing weighted trie from SLP}
\label{algo:slp2trie}
Construct right $q$-gram neighbor graph $G=(V,E_r)$\;
Calculate $\VarOcc(X_i)$ for $i = 1,\ldots, n$\;
Calculate $|\nodelabel(X_i)|$ for $i = 1,\ldots, n$\;
\lFor{$i = 0,\ldots, n$}{
  $\mathsf{visited}[i] = \mathsf{false}$\;
}
$X_{i_1} = \variable{\kushi(1,q)} = \lmq(X_n)$\;
Define $X_{i_0}$ so that $X_{r(i_0)} = X_{i_1}$ and $|\nodelabel(X_{i_0})| = q-1$\;
$\mathit{root} \leftarrow$ new node\tcp*[l]{root of resulting trie}
\ref{procedure:bdf}($i_0$, $\mathit{root}$)\;
\Return $\mathit{root}$
\end{algorithm2e}

\begin{procedure}[t]
  \caption{BuildDepthFirst($i$, $\mathit{trieNode}$)}
  \label{procedure:bdf}
  \SetKw{KwAND}{and}
  \SetKw{KwOR}{or}
  \SetKw{KwBREAK}{break}
  \SetKwFunction{BDF}{\ref{procedure:bdf}}
  \tcp{add prefix of $r(i)$ to trieNode while right neighbors of $i$ are unique}
  $l \leftarrow 0$; $k \leftarrow i$\;
  \While{
    {$\textsf{true}$}
  }{
    $l \leftarrow l + |\nodelabel(X_k)|$\;
    $\mathsf{visited}[k] \leftarrow \mathsf{true}$\;    
    \tcp{exit loop if right neighbor is possibly non-unique or is visited}
    \lIf{$|X_{r(k)}| < q$ \KwOR $\mathsf{visited}[\lmq(X_{r(k)})] =
      \mathsf{true}$}{
      {\KwBREAK}\;
    }
    $k \leftarrow\lmq(X_{r(k)})$\;
  }
  add new branch from $\mathit{trieNode}$ with string $X_{r(i)}([1:l])$\;\label{algo:prefadd}
  let end of new branch be $\mathit{newTrieNode}$\;
  \tcp{If $|X_{r(k)}| < q$,  there may be multiple right neighbors.}
  \tcp{If $|X_{r(k)}|\geq q$, nothing is done because it has already been visited.}
  \For{$X_c \in \{ X_j \mid (X_k,X_j) \in E_r \}$}{
    \If{$\mathsf{visited}[c] = \mathsf{false}$}{
      \BDF($X_c$, $\mathit{newTrieNode}$)\;
    }
  }
\end{procedure}

\begin{lemma}
  $\trie$ can be constructed in time linear in its size.
\end{lemma}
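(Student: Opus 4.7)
The plan is to bound the running time of Algorithm 1 by (i) $O(n)$ preprocessing and (ii) $O(|\trie|)$ work inside the recursive construction, and then to argue that these two contributions together give a cost linear in $|\trie|$.

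For preprocessing (Algorithm~\ref{algo:slp2trie}, lines 1--7), I would invoke the preceding lemma to build $G_q$ in $O(n)$ time, compute $\VarOcc(X_i)$ and $|X_i|$ by bottom-up dynamic programming in $O(n)$ time, and obtain $|\nodelabel(X_i)| = |t_i| - (q-1)$ in $O(1)$ per variable from $|X_{\ell(i)}|$ and $|X_{r(i)}|$. Initializing $\mathsf{visited}$, fetching $X_{i_1} = \lmq(X_n)$, and installing the dummy $X_{i_0}$ with the trie root sum to $O(n)$.

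For Procedure~\ref{procedure:bdf}, I would separately charge its three kinds of work. The $\mathsf{visited}$ guard ensures each $X_i \in V$ is entered by the while loop at most once, so across all recursive invocations the while loops together perform $O(|V|) \leq O(n)$ constant-time iterations. The for loop enumerates outgoing edges of $G_q$; since each edge is considered from its tail exactly once and $|E_r| \leq 2n$ by the earlier bound, its aggregate cost is $O(n)$. Finally, each ``add new branch'' step (line~\ref{algo:prefadd}) with accumulated length $l$ costs $O(l)$ time by Lemma~\ref{label:prefix_decompression}, since it amounts to partial prefix decompression of $X_{r(i)}$.

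The main obstacle is showing that the total length $\sum l$ accumulated across all branch-addition steps equals $|\trie|$. Each while loop walks a maximal chain of uniquely-determined right-neighbors starting from an unvisited $X_i \in V$ and contributes $|\nodelabel(X_k)|$ for each $X_k$ on the chain; the $\mathsf{visited}$ flag guarantees no $X_k$ is walked twice, and Lemma~\ref{lemma:connected}, combined with the recursion over unvisited outgoing neighbors, guarantees every $X_k \in V$ is eventually walked. Consequently $\sum l = \sum_{X_i \in V} |\nodelabel(X_i)|$, which together with the dummy root's $q-1$ characters is exactly $|\trie|$. Combining the three charges yields total running time $O(n + |\trie|)$, which is linear in the size of $\trie$.
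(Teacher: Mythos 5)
Your accounting of the traversal is correct and matches Algorithm~\ref{algo:slp2trie}: the $\mathsf{visited}$ flags give $O(|V|)$ while-loop iterations in total, $|E_r|\leq 2n$ bounds the aggregate for-loop work, and the accumulated lengths $l$ sum (together with the dummy root's $q-1$) to exactly the number of edges of $\trie$ by Lemma~\ref{lemma:size_of_trie}. The gap is that the single clause carrying all the weight — that line~\ref{algo:prefadd} ``amounts to partial prefix decompression of $X_{r(i)}$'' and hence costs $O(l)$ by Lemma~\ref{label:prefix_decompression} — is asserted, not proved, and this is precisely the crux to which the paper's proof is devoted. One must show that for each maximal chain $X_{i_j},\ldots,X_{i_k}$ walked by one execution of the while loop (a chain starts either at the dummy $X_{i_0}$ or after an edge with $|X_{r(i_{j-1})}|<q$, and by Lemma~\ref{lemma:unique_neighbors} its later members are the uniquely determined neighbors lying on the left-most path below $X_{r(i_j)}$), the concatenation $\nodelabel(X_{i_j})\cdots\nodelabel(X_{i_k})$ is a prefix of $\derive(X_{r(i_j)})$; in particular the accumulated $l$ never exceeds $|X_{r(i_j)}|$. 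Without this, the call $X_{r(i_j)}([1:l])$ is not even known to be well defined, the $O(l)$ charge via Lemma~\ref{label:prefix_decompression} is unsupported, and — since the lemma claims to construct $\trie$ — you have not argued that the string written on the branch equals the aggregated labels, i.e., that the object built is $\trie$ at all.

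To close the gap you would argue as the paper does: when a chain is entered, the condition $|X_{r(i_{j-1})}|<q$ (or $i_j=i_0$) forces $\nodelabel(X_{i_j})$ to be a prefix of $\derive(X_{r(i_j)})$; the subsequent chain members are descendants of $X_{r(i_j)}$ on its left-most path, and each successive $\nodelabel()$ extends the already decompressed prefix by the next block of consecutive $q$-grams stabbed inside $X_{r(i_j)}$, so the concatenation remains a prefix of $\derive(X_{r(i_j)})$ throughout. With that structural fact in place, your charging scheme yields $O(n+|\trie|)$ overall (the same bound the paper obtains, with the $O(n)$ preprocessing of Lemmas~\ref{label:prefix_decompression} and the neighbor-graph construction), and the proof goes through.
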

\begin{proof}
  See Algorithm~\ref{algo:slp2trie}.
  Let $G$ be the $q$-gram neighbor graph.
  We construct $\trie$ in a depth first manner starting at $X_{i_0}$.
  The crux of the algorithm is that rather than 
  computing $\nodelabel()$ separately for each variable,
  we are able to aggregate the $\nodelabel()$s and limit all partial
  decompressions of variables to prefixes of variables, so that
  Lemma~\ref{label:prefix_decompression} can be used.

  Any directed acyclic path on $G$ starting at $X_{i_0}$
  can be segmented into multiple sequences of variables,
  where each sequence $X_{i_j}, \ldots, X_{i_k}$
  is such that $j$ is the only integer in $[j:k]$ such that
  $j = 0 $ or $|X_{r(i_{j-1})}| < q$.
  From Lemma~\ref{lemma:unique_neighbors},
  we have that $X_{i_{j+1}},\ldots,X_{i_k}$
  are uniquely determined.
  If $j>0$, $\nodelabel(X_{i_j})$ is a prefix of $\derive(X_{r(i_j)})$
  since $|X_{r(i_{j-1})}| < q$
  (see Fig.~\ref{fig:qgramneighbor} Right),
  and if $j=0$, $\nodelabel(X_{i_0})$ is again a prefix of
  $\derive(X_{r(i_0)}) = \derive(X_{i_1})$.
  It is not difficult to see that $\nodelabel(X_{i_j})\cdots\nodelabel(X_{i_{k}})$ is
  also a prefix of $X_{r(i_j)}$ since 
  $X_{i_{j+1}},\ldots,X_{i_k}$ are all descendants of $X_{r(i_j)}$, and
  each $\nodelabel()$ extends the
  partially decompressed string to consider consecutive $q$-grams in $X_{r(i_j)}$.
  Since prefixes of variables of SLPs can be decompressed in time
  proportional to the output size with linear time pre-processing
  (Lemma~\ref{label:prefix_decompression}), the lemma follows.
  \qed
\end{proof}

We only illustrate how the character labels are determined
in the pseudo-code of Algorithm~\ref{algo:slp2trie}.
It is straightforward to assign a weight $\VarOcc(X_k)$ to
each node of $\trie$ that corresponds to $\nodelabel(X_k)$.

\begin{lemma}
  \label{lemma:size_of_trie}
  The number of edges in $\trie$ is
  $(q-1) + \sum \{ |t_i|-(q-1) \mid |X_i| \geq q, i=1,\ldots,n\} =
  |T| - \mathit{dup}(q,\mathcal{T})$ where
  \[
  \mathit{dup}(q,\mathcal{T}) 
  = \sum\{ (\VarOcc(X_i) - 1) \cdot (|t_i| - (q-1)) \mid |X_i| \geq q, i=1,\ldots,n \}\}
  \]
\end{lemma}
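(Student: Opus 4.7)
The plan is to verify the stated equalities in two independent steps: first, compute the edge count of $\trie$ directly from Algorithm~\ref{algo:slp2trie}; second, rewrite the resulting sum using a double-counting identity for length-$q$ positions of $T$ (essentially a byproduct of Lemma~\ref{lemma:qn_key}). The passage from one to the other is then pure arithmetic.

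For the first step, I would read off the trie's structure from the construction. The root has a single branch of $|\nodelabel(X_{i_0})| = q-1$ edges (the prefix $T([1:q-1])$ hanging below the dummy root), and every non-dummy variable $X_i$ visited along the spanning tree contributes a branch of $|\nodelabel(X_i)| = |t_i|-(q-1)$ edges. By Lemma~\ref{lemma:connected} the spanning tree visits every vertex of $G_q$ exactly once, and the vertex set of $G_q$ is by definition $\{X_i : |X_i|\ge q\}$. Summing immediately gives the first claimed equality
\[
(q-1) + \sum_{i : |X_i|\ge q}(|t_i|-(q-1)).
\]

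For the second step, I would double-count the $|T|-q+1$ length-$q$ positions of $T$. Each position $[u:u+q-1]$ is stabbed by a unique node of the derivation tree (by the definition of $\kushi$), and a node whose label is $X_i$ stabs exactly $|t_i|-(q-1)$ length-$q$ intervals whenever $|X_i|\ge q$, and none otherwise. This is the same partition that underlies Lemma~\ref{lemma:qn_key}; grouping stabbing nodes by their label and using $\VarOcc(X_i)$ yields
\[
|T|-q+1 \;=\; \sum_{i : |X_i|\ge q} \VarOcc(X_i)\cdot (|t_i|-(q-1)).
\]

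Finally, I would expand $\mathit{dup}(q,\mathcal{T}) = \sum_{i:|X_i|\ge q}(\VarOcc(X_i)-1)(|t_i|-(q-1))$ termwise and substitute the identity above to obtain $\sum_{i:|X_i|\ge q}(|t_i|-(q-1)) = |T|-q+1 - \mathit{dup}(q,\mathcal{T})$, after which adding $q-1$ produces $|T|-\mathit{dup}(q,\mathcal{T})$. There is no real obstacle here; the only point that needs care is that both the trie-edge sum and the stabbing sum are indexed over the same set $\{i : |X_i|\ge q\}$, which on the trie side is the vertex set of $G_q$ and on the stabbing side follows because a variable shorter than $q$ cannot stab a length-$q$ interval. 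A brief side remark that $|X_i|\ge q$ implies $|t_i|\ge q$ (a short case check on $t_i = \suffix(\derive(X_{\ell(i)}),q-1)\prefix(\derive(X_{r(i)}),q-1)$) ensures every term is well-defined and non-negative.
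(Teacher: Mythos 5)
Your proposal is correct and follows essentially the same route as the paper: the edge count $(q-1)+\sum\{|t_i|-(q-1) \mid |X_i|\geq q\}$ is read off the construction, and the relation to $|T|-\mathit{dup}(q,\mathcal{T})$ comes from the fact that each derivation-tree node labeled $X_i$ accounts for exactly $|t_i|-(q-1)$ length-$q$ positions while the spanning tree uses each variable only once. The only difference is that you make explicit the partition identity $|T|-q+1=\sum_{i:|X_i|\geq q}\VarOcc(X_i)\,(|t_i|-(q-1))$ (and the check that $|X_i|\geq q$ implies $|t_i|\geq q$), which the paper's proof leaves implicit in its ``reduced by $|t_i|-(q-1)$ compared to $T$'' argument.
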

\begin{proof}
  $(q-1)+\sum \{ |t_i|-(q-1) \mid |X_i| \geq q, i=1,\ldots,n\}$ is straight
  forward from the definition of $\nodelabel(X_i)$ and the
  construction of $\trie$.
  Concerning $\mathit{dup}$, 
  each variable $X_i$ occurs $\VarOcc(X_i)$ times
  in the derivation tree, but only once in the directed spanning tree.  
  This means that for each occurrence after the first,
  the size of $\trie$ is reduced by $|\nodelabel(X_i)|=|t_i| - (q-1)$ compared to $T$. 
  Therefore, the lemma follows.
  \qed
\end{proof}

To efficiently count the weighted $q$-gram frequencies on $\trie$,
we can use suffix trees.
A suffix tree for a trie is defined as a generalized suffix tree
for the set of strings represented in the trie as leaf to root paths.
\footnote{
  When considering leaf to root paths on $\trie$, 
  the direction of the string is the reverse of what is
  in $T$. However, this is merely a matter of representation of the
  output.
}
The following is known.

\begin{lemma}[\cite{shibuya03:_const_suffix_tree_tree_large_alphab}]
  Given a trie of size $m$, 
  the suffix tree for the trie can be constructed in
  $O(m)$ time and space.
\end{lemma}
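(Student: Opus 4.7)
Since the lemma is stated as a direct quotation from~\cite{shibuya03:_const_suffix_tree_tree_large_alphab}, my plan is to appeal to that reference rather than reprove the construction from scratch. Nonetheless, let me sketch what the proof would look like in outline, so that the reader sees why $O(m)$ time is achievable and not merely plausible.

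A trie of size $m$ implicitly defines up to $m$ suffixes, one per node, where the suffix associated with node $v$ is the leaf-to-root path passing through $v$ (read in the appropriate direction). The total length of these suffixes may reach $\Theta(m^2)$ in the worst case (e.g.\ a single-path trie gives $m$ suffixes of lengths $1,2,\ldots,m$), so the construction must avoid materializing each suffix explicitly. The strategy would be to generalize a linear-time suffix tree construction (McCreight's or Ukkonen's algorithm) from strings to tries: traverse the trie in depth-first order, and when entering a node $v$ from its parent, extend the partially-built generalized suffix tree by the one extra character that the path from $v$ contributes relative to the path from $v$'s parent. Suffix links, now defined on trie nodes rather than on positions of a string, are used to jump to the correct insertion locus in amortized constant time per output edge.

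The main obstacle is the amortized analysis. For the string case, the standard argument bounds the total number of suffix-link hops by the length of the string. Here, one must show that the analogous quantity over the entire DFS on the trie is still $O(m)$, even though the DFS may repeatedly descend into sibling subtrees after backtracking. This requires a careful accounting: when the DFS backtracks from a subtree, the ``active point'' of the construction must be restored to what it was at the branching ancestor, and then advanced again along a different child, without losing the amortized bound. Shibuya's construction handles this with a versioned (persistent) representation of the active locus together with nearest-marked-ancestor bookkeeping on the trie, so that the work charged to each trie edge is $O(1)$ amortized.

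For the purposes of this paper, I would simply cite~\cite{shibuya03:_const_suffix_tree_tree_large_alphab} for the black-box statement and note that the input trie $\trie$ built by Algorithm~\ref{algo:slp2trie} has size bounded by Lemma~\ref{lemma:size_of_trie}, so the suffix tree on $\trie$ is built in $O(|T|-\mathit{dup}(q,\mathcal{T}))$ time, which is exactly what is needed for the downstream weighted $q$-gram frequency computation.
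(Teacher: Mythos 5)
Your approach matches the paper exactly: the lemma is imported as a black box from Shibuya's work, and the paper gives no proof of its own, so citing the reference and noting that it is applied to the trie $\trie$ of size bounded by Lemma~\ref{lemma:size_of_trie} is all that is required. (Your informal sketch leans on a McCreight/Ukkonen-style suffix-link generalization, whereas Shibuya's linear-time construction for large alphabets is actually a Farach-style recursive algorithm, but since you rely on the citation rather than the sketch, this does not affect correctness.)
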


With a suffix tree, it is a simple exercise to solve the weighted
$q$-gram frequencies problem on $\trie$ in linear time.
In fact, it is known that the suffix array for the common suffix trie
can also be constructed in
linear time~\cite{ferragina09:_compr}, as well as its longest common prefix
array~\cite{kimura11}, which can also be used to solve the problem in
linear time.

\begin{corollary}
  The weighted $q$-gram frequencies problem on a trie of size $m$
  can be solved in $O(m)$ time and space.
\end{corollary}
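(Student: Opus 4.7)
The plan is to lift the standard linear-time algorithm for $q$-gram frequencies on a (weighted) string to the trie setting, by using either the generalized suffix tree or the suffix array plus LCP array for the trie whose linear-time constructions are quoted just above. Each position in $\trie$ (a node other than the root) is the starting point of a leaf-to-root suffix, and a length-$q$ path climbing from that position toward the root represents the $q$-gram occurring at that position; its contribution to the total frequency is the weight attached to that position (namely $\VarOcc(X_k)$ for the appropriate variable $X_k$). Summing these contributions over all positions having the same $q$-gram prefix is therefore exactly what we need.

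First I would invoke the cited construction to obtain, in $O(m)$ time and space, the suffix array $\SA$ and LCP array $\LCP$ of the trie (equivalently, the generalized suffix tree). At the same time I would build an array $w[\cdot]$ whose entry for each suffix is the weight of its starting position in $\trie$, which is available in $O(1)$ time per position from the construction of Algorithm~\ref{algo:slp2trie}. This preprocessing is still $O(m)$.

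Next I would scan $\SA$ once from left to right, maintaining the invariant that a maximal run of consecutive suffixes $\SA[l],\dots,\SA[r]$ having pairwise $\LCP\ge q$ corresponds to a single distinct $q$-gram (we must discard runs whose suffixes are shorter than $q$, which is an $O(1)$ check per run). For each such run I output the pair consisting of one representative starting position in the trie and the accumulated weight $\sum_{i=l}^{r} w[\SA[i]]$. Because each index of $\SA$ is visited a constant number of times, this scan runs in $O(m)$ total time and produces the compact output of size $O(m)$ promised in Problem~\ref{problem:SLPqgramfreq}. Equivalently, on the suffix tree one locates the nodes at string depth $\ge q$ and sums leaf weights within each subtree rooted at the depth-$q$ locus, which is a standard bottom-up aggregation in $O(m)$ time.

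The only subtlety, rather than an obstacle, is the footnote caveat that a trie node may have several children with the same character: this can be handled by treating duplicated edges as parallel in the suffix-tree construction, which does not change the asymptotic bounds, since the total number of positions and hence suffixes of $\trie$ is still $O(m)$. Combining preprocessing and the single scan yields the claimed $O(m)$ time and space bound.
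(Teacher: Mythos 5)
Your proposal is correct and follows essentially the same route as the paper, which simply invokes the linear-time suffix tree (or suffix array and LCP array) construction for the trie and leaves the weighted counting as a standard exercise. You merely spell out the details of that standard scan (grouping suffixes with LCP at least $q$ and summing weights), including the leaf-to-root orientation and the duplicate-child caveat, all consistent with the paper's intent.
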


From the above arguments, the theorem follows.
\begin{theorem}
  The $q$-gram frequencies problem on an SLP $\mathcal{T}$ of size
  $n$, representing string $T$ can be solved in
  $O(\min\{qn,|T| - \mathit{dup}(q,\mathcal{T})\})$ time and space.
\end{theorem}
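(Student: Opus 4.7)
The plan is to compose the constructions the paper has already assembled. First, I would compute $|X_i|$, $\VarOcc(X_i)$, $\lmq(X_i)$, and $\rmq(X_i)$ for every variable by one bottom-up pass of the SLP in $O(n)$ time, then build the right $q$-gram neighbor graph $G_q$ in $O(n)$ time, and extract a directed spanning tree rooted at the dummy variable $X_{i_0}$ by a single depth-first traversal; this is well-defined because $G_q$ is connected by Lemma~\ref{lemma:connected}. I would then run Algorithm~\ref{algo:slp2trie} to produce the weighted trie $\trie$, labelling each trie node that comes from $\nodelabel(X_k)$ with weight $\VarOcc(X_k)$. The earlier lemma on the algorithm ensures this step takes $O(|\trie|)$ time, because consecutive variables on a unique-right-neighbor chain are aggregated into a single prefix-decompression call that fits the hypothesis of Lemma~\ref{label:prefix_decompression}.

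Next, I would build a suffix tree (or, equivalently, a suffix array and its LCP array) of the common-suffix-trie obtained from $\trie$ in $O(|\trie|)$ time using the cited trie suffix-tree construction, and then run the standard linear-time weighted $q$-gram frequency sweep. Correctness of the reduction is exactly Lemma~\ref{lemma:path}, applied along every root-to-leaf path of the spanning tree: each such path contributes precisely the weighted $q$-gram counts from its variables' $t_i$'s, and summing over all paths recovers the identity $|\Occ(T,P)| = \sum_{i=1}^{n} \VarOcc(X_i)\cdot |\Occ(t_i,P)|$ of Lemma~\ref{lemma:qn_key}. Thus for every $P\in\Sigma^q$ with $\Occ(T,P)\neq\emptyset$ the frequency is output in time and space $O(|\trie|)$, plus an absorbed $O(n)$ for preprocessing.

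The remaining step is to bound $|\trie|$ by the claimed minimum. The bound $|\trie| = |T|-\mathit{dup}(q,\mathcal{T})$ is exactly Lemma~\ref{lemma:size_of_trie}. For the $qn$ side I would argue directly: $\trie$ consists of the initial length-$(q-1)$ path from $\nodelabel(X_{i_0})$ together with, for each node of the spanning tree, at most $|\nodelabel(X_i)| = |t_i|-(q-1) \leq q-1$ additional edges, since $|t_i|\leq 2(q-1)$; the spanning tree has at most $n$ nodes, so this totals $O(qn)$. Hence $|\trie| \leq \min\{qn,\,|T|-\mathit{dup}(q,\mathcal{T})\}$ and the theorem follows. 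The only point I would want to verify carefully, rather than dismiss as routine, is that the running time of Algorithm~\ref{algo:slp2trie} really matches $|\trie|$ and not $|\trie|+\sum_i |X_i|$ or similar; the $\mathsf{visited}$ flag together with the aggregation of unique-right-neighbor chains into one prefix decompression is exactly what guarantees this, and it is the only place in the argument where the bookkeeping is genuinely subtle.
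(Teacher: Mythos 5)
Your proposal is correct and follows essentially the same route as the paper, whose proof of the theorem is simply the composition of the preceding lemmas: build $G_q$ and a spanning tree, construct the weighted trie via Algorithm~\ref{algo:slp2trie} in time linear in its size, solve weighted $q$-gram frequencies on the trie with a trie suffix tree, and bound the trie size by Lemma~\ref{lemma:size_of_trie} together with $|\nodelabel(X_i)|\leq q-1$. Your closing observation about why the construction time is proportional to $|\trie|$ (the visited flags plus aggregation of unique-right-neighbor chains into single prefix decompressions) is exactly the point the paper's trie-construction lemma establishes.
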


Note that since each $q\leq |t_i|\leq 2(q-1)$,
and $|\nodelabel(X_i)| = |t_i| - (q-1)$, the
total length of decompressions made by the algorithm, i.e. the size
of the reduced problem, is at least halved and can be as small as $1/q$
(when all $|t_i|=q$, for example, in an SLP that represents LZ78
compression),
compared to the previous $O(qn)$ algorithm.

\section{Preliminary Experiments}

We first evaluate the size of the trie $\trie$ induced from the 
right $q$-gram neighbor graph, on which the running time of the new algorithm
of Section~\ref{section:new_algorithm} is dependent.
We used data sets obtained from Pizza \& Chili Corpus,
and constructed SLPs using the RE-PAIR~\cite{LarssonDCC99} compression algorithm.
Each data is of size 200MB.
Table~\ref{table:zsize} shows the sizes of $\trie$ for different values of $q$,
in comparison with the total length of strings $t_i$, 
on which the previous $O(qn)$-time algorithm of Section~\ref{section:qn} works.
We cumulated the lengths of all $t_i$'s only for those satisfying $|t_i| \geq q$,
since no $q$-gram can occur in $t_i$'s with $|t_i| < q$.
Observe that for all values of $q$ and for all data sets,
the size of $\trie$ (i.e., the total number of characters in $\trie$) is 
smaller than those of $t_i$'s and the original string.

\begin{table}[t]
  \caption{
   A comparison of the size of $\trie$ and the total length of strings $t_i$ for 
   SLPs that represent textual data from Pizza \& Chili Corpus. 
   The length of the original text is 209,715,200.
   The SLPs were constructed by RE-PAIR~\cite{LarssonDCC99}.
  }
  \label{table:zsize}
  \begin{center}
    \scriptsize
    \setlength{\tabcolsep}{1pt}
    \renewcommand{\rmdefault}{ptm}
    \renewcommand{\sfdefault}{phv}
    \renewcommand{\ttdefault}{pcr}
    \normalfont
    
\begin{tabular}{|c|r|r|c|r|r|c|r|r|c|r|r|}
 \hline 
 & \multicolumn{2}{c|}{XML} & &  \multicolumn{2}{c|}{DNA} & & \multicolumn{2}{c|}{ENGLISH} & & \multicolumn{2}{c|}{PROTEINS} \\ \hline
% & dblp &  & dna &  & english &  & proteins &  \\ \hline
$q$ & $\sum |t_i|$ & size of $\trie$ & & $\sum |t_i|$ & size of $\trie$ & & $\sum |t_i|$ & size of $\trie$ & & $\sum |t_i|$ & size of $\trie$ \\ \hline
2 & 19,082,988 & 9,541,495 &   & 46,342,894 & 23,171,448 &   & 37,889,802 & 18,944,902 &   & 64,751,926 & 32,375,964\\ \hline
3 & 37,966,315 & 18,889,991 &   & 92,684,656 & 46,341,894 &   & 75,611,002 & 37,728,884 &   & 129,449,835 & 64,698,833\\ \hline
4 & 55,983,397 & 27,443,734 &   & 139,011,475 & 69,497,812 &   & 112,835,471 & 56,066,348 &   & 191,045,216 & 93,940,205\\ \hline
5 & 72,878,965 & 35,108,101 &   & 185,200,662 & 92,516,690 &   & 148,938,576 & 73,434,080 &   & 243,692,809 & 114,655,697\\ \hline
6 & 88,786,480 & 42,095,985 &   & 230,769,162 & 114,916,322 &   & 183,493,406 & 89,491,371 &   & 280,408,504 & 123,786,699\\ \hline
7 & 103,862,589 & 48,533,013 &   & 274,845,524 & 135,829,862 &   & 215,975,218 & 103,840,108 &   & 301,810,933 & 127,510,939\\ \hline
8 & 118,214,023 & 54,500,142 &   & 315,811,932 & 153,659,844 &   & 246,127,485 & 116,339,295 &   & 311,863,817 & 129,618,754\\ \hline
9 & 131,868,777 & 60,045,009 &   & 352,780,338 & 167,598,570 &   & 273,622,444 & 126,884,532 &   & 318,432,611 & 131,240,299\\ \hline
10 & 144,946,389 & 65,201,880 &   & 385,636,192 & 177,808,192 &   & 298,303,942 & 135,549,310 &   & 325,028,658 & 132,658,662\\ \hline
15 & 204,193,702 & 86,915,492 &   & 477,568,585 & 196,448,347 &   & 379,441,314 & 157,558,436 &   & 347,993,213 & 138,182,717\\ \hline
20 & 255,371,699 & 104,476,074 &   & 497,607,690 & 200,561,823 &   & 409,295,884 & 162,738,812 &   & 364,230,234 & 142,213,239\\ \hline
50 & 424,505,759 & 157,069,100 &   & 530,329,749 & 206,796,322 &   & 429,380,290 & 165,882,006 &   & 416,966,397 & 156,257,977\\ \hline
100 & 537,677,786 & 192,816,929 &   & 536,349,226 & 207,838,417 &   & 435,843,895 & 167,313,028 &   & 463,766,667 & 168,544,608\\ \hline

\end{tabular}

  \end{center}
\end{table}

The construction of the suffix tree or array for a trie,
as well as the algorithm for Lemma~\ref{label:prefix_decompression},
require various tools such as level ancestor 
queries~\cite{dietz91:_findin,berkman94:_findin,bender04:_level_ances_probl}
for which we did not have an efficient implementation.
Therefore, we try to assess the practical impact of the reduced problem size
using a simplified version of our new algorithm.
We compared three algorithms ($\nsa$, $\ssa$, $\stsa$) that count the occurrence frequencies of all
$q$-grams in a text given as an SLP.
$\nsa$ is the $O(|T|)$-time algorithm which works on the uncompressed text,
using suffix and LCP arrays.
$\ssa$ is our previous $O(qn)$-time algorithm~\cite{goto11:_fast_minin_slp_compr_strin},
and $\stsa$ is a simplified version of our new algorithm.
$\stsa$ further reduces the weighted $q$-gram frequencies problem on $\trie$,
to a weighted $q$-gram frequencies problem on a single string as follows:
instead of constructing $\trie$, each branch of $\trie$ (on line~\ref{algo:prefadd} of~\ref{procedure:bdf})
is appended into a single string.
The $q$-grams that are represented in the branching edges of $\trie$
can be represented in the single string, by redundantly adding 
$\suffix(X_{r(i)}([1:l]),q-1)$ in front of the string corresponding to
the next branch.
This leads to some duplicate partial decompression, but the resulting
string is still always shorter than the string produced by our
previous algorithm~\cite{goto11:_fast_minin_slp_compr_strin}.
The partial decompression of $X_{r(i)}([1:l])$ is implemented using
a simple $O(h+l)$ algorithm, where $h$ is the height of the SLP
which can be as large as $O(n)$.

All computations were conducted on a Mac Pro (Mid 2010)
with MacOS X Lion 10.7.2,
and 2 x 2.93GHz 6-Core Xeon processors and 64GB Memory,
only utilizing a single process/thread at once.
The program was compiled using the GNU C++ compiler ({\tt g++}) 4.6.2
with the {\tt -Ofast} option for optimization.
The running times were measured in seconds, after reading
the uncompressed text into memory for $\nsa$, and
after reading the SLP that represents the text into memory for $\ssa$ and
$\stsa$.
Each computation was repeated at least 3 times, and the average was taken.

Table~\ref{table:running_time} summarizes the running times of the three algorithms.
$\ssa$ and $\stsa$ computed weighted $q$-gram frequencies on $t_i$ and $\trie$, respectively.
Since the difference between the total length of $t_i$ and
the size of $\trie$ becomes larger as $q$ increases,
$\stsa$ outperforms $\ssa$ when the value of $q$ is not small.
In fact, in Table~\ref{table:running_time} SSA2 was faster than $\ssa$
for all values of $q > 3$.
$\stsa$ was even faster than $\nsa$ on the XML data whenever $q \leq 20$.
What is interesting is that $\stsa$ outperformed $\nsa$ on the ENGLISH data 
when $q = 100$.

\begin{table}[t]
  \caption{
    Running time in seconds for SLPs that represent textual data from Pizza \& Chili Corpus.
    The SLPs were constructed by RE-PAIR~\cite{LarssonDCC99}.
    Bold numbers represent the fastest time for each data and $q$.
    $\stsa$ is faster than $\ssa$ whenever $q>3$.
  }
  \label{table:running_time}
  \begin{center}
    \scriptsize
    \setlength{\tabcolsep}{1pt}
    \renewcommand{\rmdefault}{ptm}
    \renewcommand{\sfdefault}{phv}
    \renewcommand{\ttdefault}{pcr}
    \normalfont
    
\begin{tabular}{|c|r|r|r|c|r|r|r|c|r|r|r|c|r|r|r|}
 \hline 
 & \multicolumn{3}{c|}{XML} & &  \multicolumn{3}{c|}{DNA} & &  \multicolumn{3}{c|}{ENGLISH} & & \multicolumn{3}{c|}{PROTEINS} \\ \hline
$q$ & $\nsa$ & $\ssa$ & $\stsa$ & & $\nsa$ & $\ssa$ & $\stsa$ & & $\nsa$ & $\ssa$ & $\stsa$ & & $\nsa$ & $\ssa$ & $\stsa$ \\ \hline
2 & 41.67 & \textbf{6.53} & 7.63 &   & 61.28 & \textbf{19.27} & 22.73 &   & 56.77 & \textbf{16.31} & 19.23 &   & 60.16 & \textbf{27.13} & 30.71 \\ \hline
3 & 41.46 & 10.96 & \textbf{10.92} &   & 61.28 & \textbf{29.14} & 31.07 &   & 56.77 & 25.58 & \textbf{25.57} &   & 60.53 & \textbf{47.53} & 50.65 \\ \hline
4 & 41.87 & 16.27 & \textbf{14.5} &   & 61.65 & 42.22 & \textbf{41.69} &   & 56.77 & 37.48 & \textbf{34.95} &   & \textbf{60.86} & 74.89 & 73.51 \\ \hline
5 & 41.85 & 21.33 & \textbf{17.42} &   & 61.57 & 56.26 & \textbf{54.21} &   & 57.09 & 49.83 & \textbf{45.21} &   & \textbf{60.53} & 101.64 & 79.1 \\ \hline
6 & 41.9 & 25.77 & \textbf{20.07} &   & \textbf{60.91} & 73.11 & 68.63 &   & 57.11 & 62.91 & \textbf{55.28} &   & \textbf{61.18} & 123.74 & 75.83 \\ \hline
7 & 41.73 & 30.14 & \textbf{21.94} &   & \textbf{60.89} & 90.88 & 82.85 &   & \textbf{56.64} & 75.69 & 63.35 &   & \textbf{61.14} & 136.12 & 72.62 \\ \hline
8 & 41.92 & 34.22 & \textbf{23.97} &   & \textbf{61.57} & 110.3 & 93.46 &   & \textbf{57.27} & 87.9 & 69.7 &   & \textbf{61.39} & 142.29 & 71.08 \\ \hline
9 & 41.92 & 37.9 & \textbf{25.08} &   & \textbf{61.26} & 127.29 & 96.07 &   & \textbf{57.09} & 100.24 & 73.63 &   & \textbf{61.36} & 148.12 & 69.88 \\ \hline
10 & 41.76 & 41.28 & \textbf{26.45} &   & \textbf{60.94} & 143.31 & 96.26 &   & \textbf{57.43} & 110.85 & 75.68 &   & \textbf{61.42} & 149.73 & 69.34 \\ \hline
15 & 41.95 & 58.21 & \textbf{32.21} &   & \textbf{61.72} & 190.88 & 84.86 &   & \textbf{57.31} & 146.89 & 70.63 &   & \textbf{60.42} & 160.58 & 66.57 \\ \hline
20 & 41.82 & 74.61 & \textbf{39.62} &   & \textbf{61.36} & 203.03 & 83.13 &   & \textbf{57.65} & 161.12 & 64.8 &   & \textbf{61.01} & 165.03 & 66.09 \\ \hline
50 & \textbf{42.07} & 134.38 & 53.98 &   & \textbf{61.73} & 216.6 & 78.0 &   & \textbf{57.02} & 166.67 & 57.89 &   & \textbf{61.05} & 181.14 & 66.36 \\ \hline
100 & \textbf{41.81} & 181.23 & 60.18 &   & \textbf{61.46} & 217.05 & 75.91 &   & 57.3 & 166.67 & \textbf{56.86} &   & \textbf{60.69} & 197.33 & 69.9 \\ \hline

\end{tabular}

  \end{center}
\end{table}

\bibliographystyle{splncs03}
\bibliography{ref}
\end{document}